\newtheorem{proposition}{Proposition}
\newtheorem{definition}{Definition}
\newtheorem{theorem}{Theorem}
\def\u{\textbf{\textit{u}}} \def\x{\textbf{\textit{x}}}  \def\vv{\textbf{\textit{v}}}
\def\y{\textbf{\textit{y}}}  \def\b{\textbf{\textit{b}}} \def\c{\textbf{\textit{c}}}
\def\d{\textbf{\textit{d}}}  \def\0{\boldsymbol 0}       
  \def\t{\textbf{\textit{t}}}
\def\etal{\itshape et al.}
\newcommand{\D}{\mbox{\bf D}}
\newcommand{\HD}{{\mbox{\bf HD}}}
\newcommand{\PD}{{\mbox{\bf PD}}}
\newcommand{\MD}{{\mbox{\bf MD}}}
\newcommand{\ZD}{{\mbox{\bf ZD}}}
\newcommand{\EHD}{{\mbox{\bf EHD}}}
\newcommand{\EZD}{{\mbox{\bf EZD}}}
\newcommand{\bec}{\begin{center}}
\newcommand{\enc}{\end{center}}
\newcommand{\bee}{\begin{eqnarray*}}
\newcommand{\ene}{\end{eqnarray*}}
\newcommand{\beq}{\begin{equation}}
\newcommand{\eeq}{\end{equation}}
\begin{document}

\title{\bf On similarity of the sample depth contours}
\vskip 5mm

\author {{Xiaohui Liu$^{a, b}$   \footnote{Corresponding author's email: csuliuxh912@gmail.com.}
        }\\ \\[1ex]
        {\em\footnotesize $^a$ School of Statistics, Jiangxi University of Finance and Economics, Nanchang, Jiangxi 330013, China}\\
        {\em\footnotesize $^b$ Research Center of Applied Statistics, Jiangxi University of Finance and Economics, Nanchang,}\\ {\em\footnotesize Jiangxi 330013, China}\\
}

\maketitle

\begin{center}
{\sc Abstract}
\end{center}

In this paper, we investigate the similarity property of the sample projection depth contours. It turns out that some of these contours are of \emph{the same shape} with different sizes, following a similar fashion to the Mahalanobis depth contours. One advantage of this investigation is the potential of bringing convenience to the computation of the projection depth contours; the other one is that we may utilize this idea to extend both the halfspace depth and zonoid depth to versions that do not vanish outside the convex hull of the data cloud, aiming at overcoming the so-called `outside problem'. Examples are also provided to illustrate the main results.
\vspace{2mm}

{\small {\bf\itshape Key words:} Projection depth; Similarity; Extended halfspace depth; Extended zonoid depth; Outside problem}
\vspace{2mm}

{\small {\bf2000 Mathematics Subject Classification Codes:} 62F10; 62F40; 62F35}

\setlength{\baselineskip}{1.5\baselineskip}

\vskip 0.1 in
\section{Introduction}
\paragraph{}
\vskip 0.1 in \label{Introduction}

To facilitate constructing robust affine equivariant estimators/inferential procedures for multivariate observations, it is necessary first to generalize the natural linear ordering existing in univariate data into higher dimensional spaces. To this end, \cite{Tuk1975} suggested a useful tool named halfspace depth. One of its major advantage is its capability to induce a center-outward ordering for observations, and hence can be used in various applications, severing as the `multivariate order statistics'.

The idea behind this depth is quite heuristic. It motivates many other similar ordering tools, following different principles, nevertheless. Among them, the most famous ones include the Oja depth \citep{Oja1983}, simplicial depth \citep{Liu1990}, zonoid depth \citep{KM1997}, Mahalanobis depth \cite{ZS2000}, and projection depth \citep{Zuo2003}, etc. For convenience of preferring  one such function over another among different depth notions, \cite{ZS2000} proposed an axiomatic definition for the general notions of statistical depth function $\D(\x, P): \mathbb{R}^{d} \rightarrow [0, 1]$, where $P$ denotes the probability measure. For given $P$, an ideal depth function is expected to satisfy generally four properties, i.e., (a) \emph{affine-invariance}, (b) \emph{maximality at a center point}, (c) \emph{monotonicity related to the center point}, and (d) \emph{vanishing at infinity}.

Based on the statistical depth functions, it is quite convenient to induce some trimmed depth regions:
\begin{eqnarray*}
  \mathcal{R}(\tau) := \{\x \in \mathbb{R}^d: \D(\x, P) \ge \tau\}, \quad \text{ for given } \tau \in (0, 1].
\end{eqnarray*}
Hereafter, we restrict $\tau > 0$ to ensure the boundedness of $\mathcal{R}(\tau)$. The boundary of $\mathcal{R}(\tau)$ is usually referred to as the $\tau$-th depth contour. Depth contours are useful graphical tools, and usually utilized in practice for visual purposes \citep{RRT1999}. Serving as the generalized quantiles, they can also be used to construct some depth-based bootstrap regions for multivariate estimators in spaces with dimension $d > 1$ \citep{YehS97, WL2012}.

Unfortunately, the sample depth contours are usually computationally intensive. Besides the Mahalanobis depth, great effort is needed to compute a depth contour induced from depth functions; see, e.g., \cite{PM2010a, LMM2017} for Tukey's halfspace depth contours, \cite{MLB2009} for the zonoid depth contours, and \cite{LZW2013, LZ2015} for the projection depth contours, among others. Since the sample simplicial depth contours are \emph{not} convex, the related computation is even more complex. It seems that no trivial algorithm exists for exactly computing the simplicial depth contours currently.

\begin{figure}[H]
\centering
	\includegraphics[angle=0,width=4in]{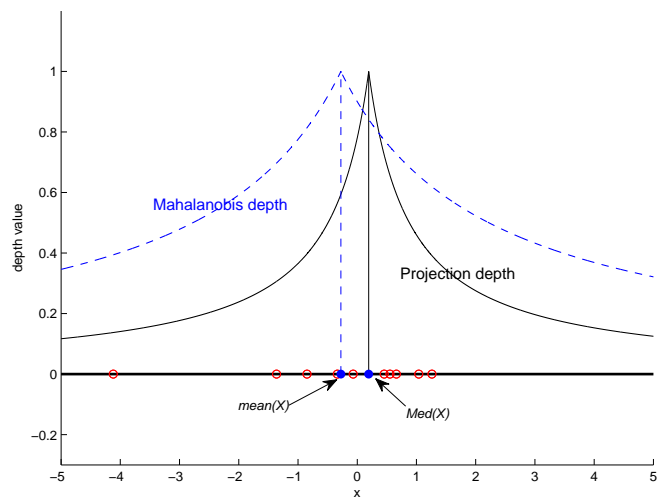}
\caption{Shown are the one dimensional projection depth and Mahalanobis depth function, where the hollow points stand for the observations, and the solid points denote the related sample mean and median, respectively.}
\label{fig:OneDimDepth}
\end{figure}

Observe that the definition way of the projection depth is very similar to that of the Mahalanobis depth in one dimensional space. Hence, their induced intervals are similar in the sense that they are symmetrical about the `center', i.e., the sample median for projection depth, and the sample mean for the Mahalanobis depth, respectively; see Figure~\ref{fig:OneDimDepth} for an illustration. Since the Mahalanobis depth contours in higher dimensional spaces inherit the similarity property because they are elliptic, it is then natural to wonder that whether or not the sample projection depth contours also enjoy this property in high dimensions. The similarity property is desirable because once it is already known that the contours are similar, we may employ it to facilitate the computation, especially when more than one similar contours are being computed simultaneously.

Usually, the depth regions/contours induced from the halfspace depth, simplicial depth and zonoid depth are not similar between each other, because they are completely determined by the data set \citep{Mos2013}. Hence, it is impossible to benefit their computation from the similarity property. On the other hand, we find in the sequel that the projection depth still possesses partly this property in spaces with $d > 1$, because it can induce some contours that are similar to each other when their depth values are small.

An another important usage of statistical depth functions is constructing depth-based classifiers \citep{GC2005}. Many classification methods based on statistical depth functions aforementioned have been developed from different principles during last decades; see \cite{DSG2016} and references therein for details. These depth-based classifiers usually enjoy many desirable properties. For example, most of them are affine invariant and may be robust against quite a proportion of outliers if a robust depth function, e.g., halfspace depth, is employed \citep{PV2015}.

However, some of them, such as the halfspace depth and zonoid depth, suffer from the so-called `outside problem'. That is, their depth values vanish outside the convex hull $\textbf{conv}(\mathcal{X}^n)$ of the given data cloud $\mathcal{X}^n = \{X_1, X_2, \cdots, X_n\} \subset \mathbb{R}^d$ ($d \ge 1$). This may bring great inconvenience to their practical applications in classification \citep{Hob2003, LMM2014}. How to overcome this is not trivial, especially for the case of Tukey's halfspace depth.

In this paper, we further consider this problem based on the discussions about the similarity property of the sample Mahalanobis depth and projection contours, and extend the conventional halfspace and zonoid depth to versions such that: (i) both of them coincide with their original counterparts inside in the convex hull of the data cloud, but (ii) do not vanish outside. These extensions are computable, and still satisfy all four properties of defining a general statistical depth function suggested by \cite{ZS2000}. They have nonsingular population versions, which is important in the practice of deriving the theoretical properties. Hence, we recommend to use them as an alterative to their original counterparts in applications such as classification.

The rest of this paper is organized as follows. We investigate the similarity property of the sample Mahalanobis depth and projection depth contours in Section \ref{Sec:MMS}. Based on this discussion, we propose the extended halfspace depth and extended zonoid depth in Section \ref{Sec:Extended}. Some illustrative examples are given in Section \ref{Sec:Illustrations}. Concluding remarks end this paper.

Throughout this paper, we mainly focus on the sample versions of the statistical depth functions and their associated contours having positive depth values, which are bounded, if no confusion arises.

\vskip 0.1 in
\section{Projection depth contours and similarity}
\paragraph{}
\vskip 0.1 in \label{Sec:MMS}

In the section, we will first investigate the similarity property existing partly in the sample projection depth contours, i.e.,
\begin{eqnarray*}
  \{\x \in \mathbb{R}^d: \PD(\x, P_n) \ge \tau\}, ~\forall \tau \in (0, 1].
\end{eqnarray*}
Following from \cite{Zuo2003}, here the projection depth is defined as follows:
\begin{eqnarray}
\label{eqn:PD}
  \PD(\x, P_n) = \frac{1}{1 + O(\x, P_n)},
\end{eqnarray}
where
\begin{eqnarray*}
  O(\x, P_n) = \sup_{\u \in \mathcal{S}^{d-1}} \frac{|\u^\top \x - \text{Med}(\u^\top \mathcal{X}^n)|}{\text{MAD}(\u^\top \mathcal{X}^n)},
\end{eqnarray*}
where $P_n$ denotes the empirical probability measure related to $\mathcal{X}^n$, $\mathcal{S}^{d-1} = \{\x: \|\x\| = 1, ~\x \in \mathbb{R}^d\}$, and $\u^{\top}\x$ denotes the projection of $\x$ onto the unit vector $\u$, and $\u^{\top}\mathcal{X}^{n} = \left\{\u^{\top}X_{1},\, \u^{\top}X_{2},\, \cdots,\, \u^{\top}X_{n}\right\}$. Let $Z_{(1)} \leq Z_{(2)} \leq \cdots \leq Z_{(n)}$ be the order statistics based on the univariate random variables $\mathcal{Z}^{n} = \{Z_{1},\, Z_{2},\, \cdots,\, Z_{n}\}$, then
\begin{eqnarray*}
  \mbox{Med}(\mathcal{Z}^{n}) &=& \frac{Z_{(\lfloor (n + 1) / 2\rfloor)} + Z_{(\lfloor (n + 2) / 2\rfloor)}}{2},\\
  \mbox{MAD}(\mathcal{Z}^{n}) &=& \mbox{Med}\{|Z_{i} - \mbox{Med}(\mathcal{Z}^{n})|,\, i = 1,\, 2,\, \cdots,\, n\},
\end{eqnarray*}
where $\lfloor \cdot \rfloor$ is the floor function.

For convenience, we assume that the given data cloud $\mathcal{X}^n$ are in general position throughout this paper. That is, there are no than $d$ data points in a $(d-1)$-dimensional hyperplane. This assumption is commonly imposed in the literature related to depth functions \citep{Don1982, MLB2009}. When $\mathcal{X}^n$ are in general position, it is easy to check that MAD($\u^\top \mathcal{X}^n) > 0$ for any $\u \in \mathcal{S}^{d-1}$.

Observe that the definition fashion of projection depth is similar to that of the Mahalanobis depth \citep{ZS2000}, i.e.,
\begin{eqnarray}
\label{eqn:MD}
  \MD(\x, P_n) = \frac{1}{1 + \sqrt{(\x - \bar{X})^\top \hat{\Sigma}^{-1} (\x - \bar{X})}},
\end{eqnarray}
where $\hat{\Sigma} = \frac{1}{n} \sum\limits_{i=1}^n (X_i - \bar{X}) (X_i - \bar{X})^\top$. Slightly different, the projection depth depends on the outlyingness function $O(\x, P_n)$, which is able to measure the outlyingness of $\x$ with respect to $\mathcal{X}^n$, through using the technique of projection pursuit, while the Mahalanobis depth is defined on the Mahalanobis distance of $\x$ to the sample mean $\bar{X}$. Since the sample Mahalanobis depth contours are of elliptical shape and in turn are similar, it is natural to expect that the sample projection depth contours also enjoy the same similarity property.

Formally, let's provide the definition of the similarity between two sample contours induced from a statistical depth function as follows.

\begin{definition}
\label{def:similarity}
  Let $\mathcal{C}_1$, $\mathcal{C}_2 \subset \mathbb{R}^d$ be two depth-induced contours with depth values $d_1$ and $d_2$, respectively. We say $\mathcal{C}_1$ and $\mathcal{C}_2$ to be similar if there exists a given point $\x_0 \in \mathbb{R}^d$ and a known function $h(\cdot, \cdot)$, conditionally on the given data set, such that: for any $\u \in \mathcal{S}^{d-1}$, we have
  \begin{eqnarray*}
    d_1 = h(\lambda_x, \u),\\
    d_2 = h(\lambda_y, \u),
  \end{eqnarray*}
  where $\x = \x_0 + \lambda_x \u$ and $\y = \x_0 + \lambda_y \u$ denote the intersection points of $\mathcal{C}_1$ and $\mathcal{C}_2$ for some $\lambda_x, \lambda_y > 0$ with the ray stemming from $\x_0$ along direction $\u$, respectively. Without loss of generality, we call $\x_0$ `similarity center', and $h(\cdot, \cdot)$ `generating function' of $\mathcal{C}_1$ and $\mathcal{C}_2$.
\end{definition}

\begin{figure}[H]
\centering
	\includegraphics[angle=0,width=4in]{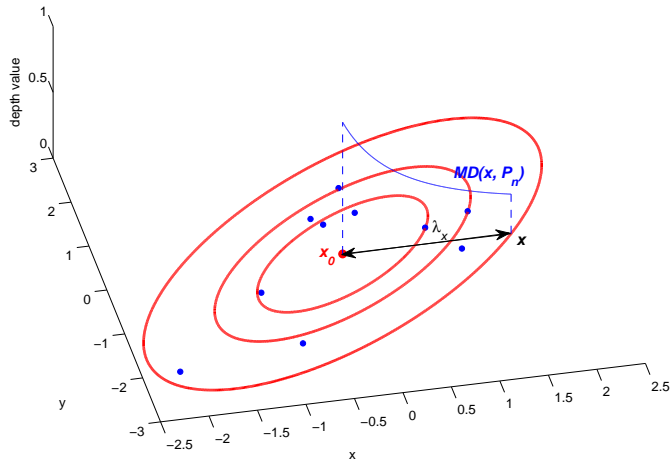}
\caption{Shown is the Mahalanobis depth function along a fixed ray stemming from $\x_0 =\bar{X}$, where the small points stand for the observations, and the big point is the sample mean.}
\label{fig:MDcontours}
\end{figure}

For any $\vv_0 \in \mathcal{S}^{d-1}$, let $\x = \bar{X} + \lambda_x \vv_0$. For the Mahalanobis depth given in \eqref{eqn:MD}, it is easy to check that the similarity center of its all sample contours is the sample mean $\bar{X}$, and the generating function satisfies
\begin{eqnarray*}
  h_M(\lambda_x, \vv_0) := \frac{1}{1 + \lambda_x \sqrt{\vv_0^\top \hat{\Sigma}^{-1} \vv_0}} = \MD(\x, P_n).
\end{eqnarray*}
That is, for the given data set, the Mahalanobis depth of $\x$ depends only on $d + 1$ unknowns, i.e., $\lambda_x$ and $\vv_0$, and is the inverse of a linear function with respect to the length $\lambda_x$ of $\x - \bar{X}$ once $\vv_0$ is given. Hence, $\MD(\x, P_n)$ decreases in a similar way when $\x$ is moving away from $\bar{X}$ along the same ray stemming from $\bar{X}$, and the related sample contours appears to be similar to each other; see Figure~\ref{fig:MDcontours} for an illustration.

Now, let's proceed to discuss the similarity property existing in the sample projection depth contours. In the literature, for given $\mathcal{X}^n$ in general position, \cite{LZW2013} have showed that there exist a finite number of data dependent direction vectors $\{\u_k\}_{k=1}^N$ such that
\begin{eqnarray}
    \label{eqn:PDsample}
    O(\x, P_n) = \max_{1\le k\le N} \frac{\u_k^\top \x - \u_k^\top \c_k}{\u_k^\top \d_k},
\end{eqnarray}
where $\c_{k}$'s and $\d_{k}$'s are some data dependent vectors, $N$ denotes the number of them. These vectors are known once the data set is given.

Based on this result, it is possible to obtain the following interesting result. Remarkable, since in one dimensional space, the definition ways of both projection depth and Mahalanobis depth follow a similar fashion. Trivially, the projection depth induced intervals are similar. Hence, we only present the result of $d \ge 2$ here.

\begin{theorem}
\label{th:PD}
  When $d \ge 2$, for the projection depth defined in \eqref{eqn:PD}, once $\lambda_x = \|\x - \x_{0, P}\| \ge \ell$, we have
  \begin{eqnarray*}
    \PD(\x, P_n) = h_P(\lambda_x, \vv_x),
  \end{eqnarray*}
  where $\ell$ is specified in \eqref{eqn:ell}, $\x_{0, P}$ stands for the sample projection depth median, and
  \begin{eqnarray*}
    h_P(\lambda_x, \vv_x) = \sum_{\vv_x \in \mathcal{C}_{k_*}} \frac{1}{1 + a_{k_*} + \lambda_x \cdot b_{k_*}(\vv_x)},
  \end{eqnarray*}
  with $\mathcal{C}_{k_*}$, $a_{k_*}$ and $b_{k_*}(\cdot)$ being specified in \eqref{eqn:cone} and \eqref{eqn:akbk}, respectively.
\end{theorem}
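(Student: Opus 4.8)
The plan is to reduce the assertion to the elementary geometry of a maximum of finitely many affine functions of the scalar radius. Fixing a direction $\vv_x\in\mathcal{S}^{d-1}$ and writing $\x=\x_{0,P}+\lambda_x\vv_x$, I would substitute into the finite representation \eqref{eqn:PDsample}; since each term there is affine in $\x$, this gives
\begin{eqnarray*}
  O(\x,P_n)=\max_{1\le k\le N}\bigl[\,a_k+\lambda_x\,b_k(\vv_x)\,\bigr],\qquad
  a_k=\frac{\u_k^\top(\x_{0,P}-\c_k)}{\u_k^\top\d_k},\qquad
  b_k(\vv_x)=\frac{\u_k^\top\vv_x}{\u_k^\top\d_k},
\end{eqnarray*}
so that the intercepts $a_k$ are direction-free constants while the slopes $b_k(\vv_x)$ are linear in $\vv_x$; these are the quantities to be recorded in \eqref{eqn:akbk}. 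Consequently, along any ray issuing from $\x_{0,P}$, the outlyingness is a convex, piecewise-linear, and---because $\x_{0,P}$ is the projection depth median, i.e. the global minimiser of the convex map $O(\cdot,P_n)$---nondecreasing function of $\lambda_x$.

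Since the successive linear pieces of a nondecreasing convex function have increasing slopes, the rightmost piece is the affine map of maximal slope $B(\vv_x):=\max_{k}b_k(\vv_x)$. Setting $\bm{\beta}_k:=\u_k/(\u_k^\top\d_k)$, this maximal slope is the support function of $\mathrm{conv}\{\bm{\beta}_1,\dots,\bm{\beta}_N\}$, so the directions for which a fixed index $k_*$ realises it form the normal cone $\mathcal{C}_{k_*}$ at the vertex $\bm{\beta}_{k_*}$, and these cones cover $\mathcal{S}^{d-1}$---this is the content of \eqref{eqn:cone}. On the interior of $\mathcal{C}_{k_*}$ the index $k_*$ is the unique slope-maximiser, so for every $\lambda_x$ past the last breakpoint of the profile one has $O(\x,P_n)=a_{k_*}+\lambda_x b_{k_*}(\vv_x)$; inverting through \eqref{eqn:PD} then produces exactly the active term of $h_P$, and the similarity of the resulting contours follows from Definition \ref{def:similarity}.

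What remains---and what I expect to be the main obstacle---is to pin down the radius beyond which the maximiser has stabilised, namely the quantity $\ell$ of \eqref{eqn:ell}. For fixed $\vv_x\in\mathcal{C}_{k_*}$ the last breakpoint is
\begin{eqnarray*}
  \lambda^{\star}(\vv_x)=\max_{k:\,b_k(\vv_x)<b_{k_*}(\vv_x)}\frac{a_k-a_{k_*}}{\,b_{k_*}(\vv_x)-b_k(\vv_x)\,},
\end{eqnarray*}
which is finite for every $\vv_x$ interior to a cone but whose control requires care as $\vv_x$ approaches a facet shared with a neighbouring cone $\mathcal{C}_j$: there $b_{k_*}(\vv_x)-b_j(\vv_x)\to0$, so $\lambda^{\star}$ can a priori blow up when the neighbour carries the larger intercept $a_j>a_{k_*}$. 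The decisive step is therefore to track the interplay between the constant intercepts and the slopes near the facets of the normal fan, and to define $\ell$ accordingly---equivalently, to identify a depth level below which the active index along the whole contour already coincides with the maximal-slope index of its direction. Once such a threshold is exhibited and its finiteness verified, the identity $O=a_{k_*}+\lambda_x b_{k_*}(\vv_x)$ holds for $\lambda_x\ge\ell$ on every ray, and feeding it into \eqref{eqn:PD} completes the proof.
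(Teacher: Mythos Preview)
Your approach is essentially the paper's: rewrite $O(\x,P_n)$ as $\max_{k}[a_k+\lambda_x\,b_k(\vv_x)]$ via \eqref{eqn:PDsample}, identify the maximal-slope index $k_*$ and its cone $\mathcal{C}_{k_*}$, and argue that past a direction-dependent threshold the active affine piece is $a_{k_*}+\lambda_x\,b_{k_*}(\vv_x)$. For the obstacle you single out---uniform finiteness of that threshold---the paper does not track the intercept/slope interplay near facets but simply asserts that $\lambda_{k^*}(\vv)$ is ``finite, and continuous with respect to $\vv\in\mathcal{C}_{k_*}\cap\mathcal{S}^{d-1}$,'' invokes compactness of $\mathcal{C}_{k_*}\cap\mathcal{S}^{d-1}$ to get $\gamma_{k_*}:=\sup_{\vv}\lambda_{k^*}(\vv)<\infty$, and sets $\ell=\max_{k_*}\gamma_{k_*}$ over the finitely many nonempty cones; your worry about possible blow-up at a shared facet when $a_j>a_{k_*}$ is thus not separately addressed in the paper but absorbed into that continuity claim.
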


\begin{proof}[Proof of Theorem \ref{th:PD}]
  By \cite{Zuo2013}, it is already known that the sample projection depth median is unique. Without confusion, write the projection depth median related to $\mathcal{X}^n$ as $\x_{0, P}$. Using this, for any $\x \neq \x_{0, P}$, \eqref{eqn:PDsample} can be further transformed into the following form:
  \begin{eqnarray*}
    O(\x, P_n) &=& \max_{1\le k\le N} \frac{\u_k^\top (\x_{0, P} + \lambda_x \vv_x) - \u_k^\top \c_k}{\u_k^\top \d_k}\\
    &=& \max_{1\le k\le N} \left\{\frac{\u_k^\top \x_{0, P} - \u_k^\top \c_k}{\u_k^\top \d_k} + \lambda_x \frac{\u_k^\top \vv_x}{\u_k^\top \d_k}\right\},
  \end{eqnarray*}
  where $\lambda_x = \|\x - \x_{0, P}\|$ and $\vv_x = (\x - \x_{0, P}) / \lambda_x$.

  For simplicity, hereafter, we denote
  \begin{eqnarray*}
    \lambda_{k^*} (\vv_x) = \inf\left\{\lambda: a_{k_*} + \lambda \cdot b_{k_*}(\vv_x) \ge a_{k} + \lambda \cdot b_{k}(\vv_x),~ \forall k \neq k_*\right\},
  \end{eqnarray*}
  where
  \begin{eqnarray}
  \label{eqn:akbk}
    a_k = \frac{\u_k^\top \x_{0, P} - \u_k^\top \c_k}{\u_k^\top \d_k}, \text{ and } b_k(\vv_x) = \frac{\u_k^\top \vv_x}{\u_k^\top \d_k}.
  \end{eqnarray}
  with $k_*$ satisfying $b_{k_*}(\vv_x) = \max\limits_{1\le k\le N} b_{k}(\vv_x)$, i.e., the maximum slope among all linear functions $a_{k} + \lambda \cdot b_{k}(\vv_x)$'s with respect to $\lambda$. For given $\vv_x$, it is easy to check that: (i) $\lambda_{k^*} (\vv_x)$ is finite, and (ii) for any $\y = \x_{0, P} + \lambda_y \vv_x$ with $\lambda_y \ge \lambda_{k^*} (\vv_x)$, we have
  \begin{eqnarray*}
    O(\y, P_n) =  a_{k_*} + \lambda_y \cdot b_{k_*}(\vv_x).
  \end{eqnarray*}

  Observe that
  \begin{eqnarray}
  \label{eqn:cone}
    \mathcal{C}_{k_*} := \left\{\t \in \mathbb{R}^d: \frac{\u_{k_*}^\top \t}{\u_{k_*}^\top \d_{k_*}} \ge \frac{\u_l^\top \t}{\u_l^\top \d_l}, ~\forall l \neq {k_*}\right\}
  \end{eqnarray}
  is a cone, and for any $\vv \in \mathcal{C}_{k_*} \cap \mathcal{S}^{d-1}$, we always have $b_{k_*}(\vv) = \max\limits_{1\le k\le N} b_{k}(\vv)$. Similar to the case of $\vv_x$, it is easy to check that $\lambda_{k^*} (\vv)$ is finite, and continuous with respect to $\vv \in \mathcal{C}_{k_*} \cap \mathcal{S}^{d-1}$. Since $\mathcal{C}_{k_*}$ is a close set, we claim that
  \begin{eqnarray*}
    \gamma_{k_*} = \sup_{\vv \in \mathcal{C}_{k_*} \cap \mathcal{S}^{d-1}} \lambda_{k^*} (\vv) < +\infty.
  \end{eqnarray*}
  By further noting that the number of such nonempty cones $\mathcal{C}_{k_*}$'s is at most $N$, and all them together form the whole space $\mathbb{R}^{d}$, we obtain
  \begin{eqnarray}
  \label{eqn:ell}
    \ell = \max_{1\le k_* \le N,~ \mathcal{C}_{k_*} \text{ is nonempty}} \gamma_{k_*} < +\infty.
  \end{eqnarray}
  Using this, the proof of this theorem follows immediately.
\end{proof}

Let
\begin{eqnarray*}
  \tau_* = \inf_{\x \in \{\y\in\mathbb{R}^d: \|\y - \x_{0, P}\| = \ell\}} \PD(\x, P_n).
\end{eqnarray*}
Using Theorem \ref{th:PD}, it is easy to show the following result.
\begin{theorem}
\label{th:similarity}
  For any $\tau_1 < \tau_2 < \tau_*$, we have that the sample projection contours having depth values $\tau_1$ and $\tau_2$, respectively, are similar to each other in the sense of Definition \ref{def:similarity}. Their similarity center is the projection depth median, and generating function is $h_P(\cdot, \cdot)$.
\end{theorem}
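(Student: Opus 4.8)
The plan is to verify Definition~\ref{def:similarity} directly, taking the similarity center to be the projection depth median $\x_{0, P}$ and the generating function to be $h_P(\cdot, \cdot)$ supplied by Theorem~\ref{th:PD}. The essential observation is that both contours, having depth values strictly below $\tau_*$, must lie entirely in the region $\{\x \in \mathbb{R}^d: \|\x - \x_{0, P}\| \ge \ell\}$, where Theorem~\ref{th:PD} guarantees the exact representation $\PD(\x, P_n) = h_P(\lambda_x, \vv_x)$. Once this localization is established, the two defining identities $d_1 = h_P(\lambda_x, \u)$ and $d_2 = h_P(\lambda_y, \u)$ follow with no further work.

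First I would invoke the monotonicity of the projection depth relative to its deepest point (property (c)). Writing $D(\lambda, \u) := \PD(\x_{0, P} + \lambda \u, P_n)$ for the depth along the ray from $\x_{0, P}$ in direction $\u$, property (c) makes $D(\cdot, \u)$ nonincreasing in $\lambda$. By the very definition of $\tau_*$ as the infimum of the depth over the sphere $\{\y \in \mathbb{R}^d: \|\y - \x_{0, P}\| = \ell\}$, we have $D(\ell, \u) \ge \tau_*$ for every $\u \in \mathcal{S}^{d-1}$, and monotonicity then propagates this inward to give $D(\lambda, \u) \ge \tau_*$ for all $\lambda \le \ell$. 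Hence every point of the closed ball of radius $\ell$ about $\x_{0, P}$ has depth at least $\tau_*$, so any contour of depth $\tau_1$ or $\tau_2$ strictly below $\tau_*$ is forced into the complementary region $\lambda > \ell$, which is exactly where Theorem~\ref{th:PD} applies.

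Next I would confirm that each such contour meets each ray from $\x_{0, P}$ in a single point. On $\{\lambda \ge \ell\}$ the generating function reduces, for the cone $\mathcal{C}_{k_*}$ containing $\u$, to $h_P(\lambda, \u) = 1 / (1 + a_{k_*} + \lambda \cdot b_{k_*}(\u))$, a strictly decreasing function of $\lambda$ (the strictness is again inherited from the monotonicity of $\PD$, equivalently from $b_{k_*}(\u) > 0$). Combining this with the continuity of the depth and its vanishing at infinity (property (d)), the intermediate value theorem produces, for each fixed $\u$ and each $\tau_i < \tau_*$, a unique radius $\lambda$ with $h_P(\lambda, \u) = \tau_i$. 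Denoting these radii $\lambda_x$ for $\tau_1$ and $\lambda_y$ for $\tau_2$, the intersection points of the two contours with the ray are $\x = \x_{0, P} + \lambda_x \u$ and $\y = \x_{0, P} + \lambda_y \u$, and by construction $\tau_1 = h_P(\lambda_x, \u)$ and $\tau_2 = h_P(\lambda_y, \u)$. Since the same center $\x_{0, P}$ and the same function $h_P$ serve every direction $\u$, this is precisely the requirement of Definition~\ref{def:similarity}, so the two contours are similar.

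The main obstacle I anticipate is the localization step, namely pinning down rigorously that every contour of depth below $\tau_*$ avoids the ball of radius $\ell$. This hinges on coupling the definition of $\tau_*$ with property (c) in the correct direction, and one must take care that the inequality $D(\ell, \u) \ge \tau_*$ holds for every single direction rather than merely in a limiting sense, which is why I would lean on the fact that $\tau_*$ is an infimum of a continuous depth over the compact sphere. The remaining steps, the uniqueness of the ray intersections and the matching of the two identities to Definition~\ref{def:similarity}, are routine consequences of Theorem~\ref{th:PD} together with the strict monotonicity of $h_P$ along rays.
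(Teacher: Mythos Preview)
Your proposal is correct and matches the paper's intended approach: the paper itself gives no detailed proof of Theorem~\ref{th:similarity}, simply stating that ``Using Theorem~\ref{th:PD}, it is easy to show the following result.'' Your argument supplies exactly the missing details---localizing both contours outside the ball of radius $\ell$ via the definition of $\tau_*$ and monotonicity, and then reading off the identities of Definition~\ref{def:similarity} from the representation $\PD(\x,P_n)=h_P(\lambda_x,\vv_x)$---which is precisely what the paper leaves implicit.
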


As shown in \cite{LZW2013, LZ2014, LZ2015}, the sample projection depth contours are of polyhedral shape \emph{due to the piecewise linear property of the sample outlyingness function}. Observe that for any $\vv_x \in \mathcal{S}^{d-1}$, it is contained by one and only one cone as defined in \eqref{eqn:cone}. Hence, as a byproduct of Theorem \ref{th:PD}, it is actually easy to show the following result.

\begin{proposition}
\label{prop:facets}
  For given $\mathcal{X}^n$, the number of the facets of the $\tau$-th sample projection depth contour is less than $N$ for any $0 < \tau < \sup\limits_{\x} \PD(\x, P_n)$.
\end{proposition}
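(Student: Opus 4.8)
The plan is to realize each sample projection depth contour as the boundary of a polytope cut out by the $N$ affine pieces appearing in \eqref{eqn:PDsample}, and then to use the cone decomposition from the proof of Theorem~\ref{th:PD} to count how many of these pieces can support a face of full dimension $d-1$. First I would translate the depth level into an outlyingness level: by \eqref{eqn:PD}, $\PD(\x,P_n)\ge\tau$ is equivalent to $O(\x,P_n)\le s$ with $s=(1-\tau)/\tau>0$, so the trimmed region is the set on which every piece in \eqref{eqn:PDsample} stays below $s$, namely
\begin{eqnarray*}
\mathcal{R}(\tau)=\bigcap_{k=1}^{N}\left\{\x\in\mathbb{R}^d:\ \frac{\u_k^\top\x-\u_k^\top\c_k}{\u_k^\top\d_k}\le s\right\}.
\end{eqnarray*}
This exhibits $\mathcal{R}(\tau)$ as an intersection of $N$ halfspaces whose $k$-th bounding hyperplane has outward normal proportional to $\u_k$ (recall $\u_k^\top\d_k>0$ under general position). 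Hence every facet of the $\tau$-th contour lies on one of these $N$ hyperplanes, and distinct facets must carry distinct normals $\u_k$, which already yields the crude bound of at most $N$ facets.

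Next I would sharpen the count using the cones $\mathcal{C}_{k_*}$ of \eqref{eqn:cone}. Writing a boundary point as $\x=\x_{0,P}+\lambda_x\vv_x$ and reusing the computation in the proof of Theorem~\ref{th:PD}, the outlyingness along the ray $\vv_x$ is controlled for large $\lambda_x$ by the unique index with $b_{k_*}(\vv_x)=\max_{1\le k\le N}b_k(\vv_x)$, that is, by the cone $\mathcal{C}_{k_*}\ni\vv_x$; the outermost crossing of the level $s$ then satisfies $a_{k_*}+\lambda_x b_{k_*}(\vv_x)=s$, so the boundary point sits on the facet with normal $\u_{k_*}$. Since each $\vv_x\in\mathcal{S}^{d-1}$ lies in one and only one cone, and the nonempty cones form a complete fan covering $\mathbb{R}^d$, the facets of $\mathcal{R}(\tau)$ are indexed by a subset of the nonempty cones. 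Consequently the number of facets is at most the number of nonempty cones $\mathcal{C}_{k_*}$, which is itself at most $N$.

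The strict inequality is the delicate point, and I expect it to be the main obstacle. A cone $\mathcal{C}_{k_*}$ is full-dimensional precisely when the normalized normal $\u_{k_*}/(\u_{k_*}^\top\d_{k_*})$ is a vertex of the convex hull of $\{\u_k/(\u_k^\top\d_k)\}_{k=1}^{N}$, equivalently when the $k_*$-th affine piece in \eqref{eqn:PDsample} is not dominated by the remaining pieces. To upgrade ``$\le N$'' to ``$<N$'' I would argue that at least one candidate direction is always redundant in this sense, so that at least one of the $N$ pieces never strictly attains the maximum and its cone is lower-dimensional, contributing no facet. I expect this to follow from the way the list $\{\u_k,\c_k,\d_k\}_{k=1}^{N}$ is assembled in \cite{LZW2013}, which collects all data-dependent candidate directions rather than an irredundant generating set, so that the normals cannot all be vertices of their convex hull. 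Pinning down that at least one piece is guaranteed dominated — and is so uniformly over all admissible $\tau$, including the small contours where the facet count equals the number of nonempty cones — is exactly the step requiring care, and is where the explicit structure of the $\c_k$, $\d_k$ and $\u_k$ would have to be invoked.
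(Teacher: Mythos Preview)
Your approach is essentially the paper's. The paper does not give a separate proof of Proposition~\ref{prop:facets}; it simply records, in the paragraph immediately preceding the statement, that each $\vv_x\in\mathcal{S}^{d-1}$ lies in one and only one cone $\mathcal{C}_{k_*}$ from \eqref{eqn:cone}, and declares the proposition an easy byproduct of Theorem~\ref{th:PD}. Your halfspace-intersection description of $\mathcal{R}(\tau)$ and your identification of facets with nonempty cones are exactly a fleshed-out version of that remark.

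Two comments on the substance. First, your cone-to-facet correspondence, as you phrase it, relies on the ``outermost crossing'' of level $s$ being governed by the index $k_*$ with maximal slope $b_{k_*}(\vv_x)$. That is only guaranteed once $\lambda_x\ge\lambda_{k^*}(\vv_x)$, i.e.\ in the similarity regime $\tau\le\tau_*$; for inner contours the active piece along a ray need not be the one selected by $\mathcal{C}_{k_*}$. This does not damage the bound, because your first step (writing $\mathcal{R}(\tau)$ as an intersection of $N$ halfspaces) already yields at most $N$ facets for every $\tau$, but you should not present the cone argument as if it applies uniformly. Second, you are right that the strict inequality is the delicate point, and you are also right that the paper does not spell it out: the paper's one-sentence justification only delivers ``at most $N$'' as written. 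Your proposed route---showing that at least one of the $N$ candidate pieces in \eqref{eqn:PDsample} is everywhere dominated, so its cone has empty interior---is the natural one, and it does hinge on the specific enumeration in \cite{LZW2013} rather than on anything in the present paper. Flag this dependence explicitly rather than leaving it as an expectation.
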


Proposition \ref{prop:facets} may be utilized to evaluate the complexity order of the algorithm for computing the projection depth contours. In practice, the true number of facets of a sample projection depth contour is far smaller than $N$.

Summarily, Theorems \ref{th:PD}-\ref{th:similarity} state an interesting result that the sample projection depth function may induce two kinds of contours: (i) some of them may \emph{not} enjoy the similarity property, i.e., the inner-most sample contours taking depth values larger than $\tau_*$; (ii) the others having depth values no larger than $\tau_*$ are similar to each other with their similarity center being the projection depth median, behaving similar to the sample Mahalanobis depth contours; see Figures \ref{fig:PDfunction}-\ref{fig:PDcontours} for illustrations. Since the second kind of contours are similar and may be generated by using the same generating function $h_P(\cdot, \cdot)$, the results of Theorems \ref{th:PD}-\ref{th:similarity} may be helpful in practice when more than one similar sample projection depth contours are being computed.

\vskip 0.1 in
\section{Extensions of the halfspace depth and zonoid depth}
\paragraph{}
\vskip 0.1 in \label{Sec:Extended}

Since the sample projection depth contours having depth values no larger than $\tau_*$ are similar to each other, we may consider these outside most contours as the copies of the $\tau_*$-th contour. They are in fact extended by using their similarity center and generating function, and hence have the same shape but are of different sizes. Observe that a statistical depth function can be completely characterized by its depth contours, and there are some depth functions, e.g., halfspace depth and zonoid depth, vanish outside the convex hull \textbf{conv}($\mathcal{X}^n$). Hence, we may use a similar continuation technique to the case of the sample projection depth to extend these depth functions to versions that do not suffer from the outside problem. This is the main focus of this section.

We review the definition of the halfspace depth and zonoid depth as follows. According to \cite{Tuk1975}, for given data $\mathcal{X}^n$, the halfspace depth of a point $\x \in \mathbb{R}^d$ with respect to $\mathcal{X}^n$ is given by
\begin{eqnarray*}
  \HD(\x, P_n) = \inf_{\u \in \mathcal{S}^{d-1}} P_n(\u^\top X \le \u^\top \x).
\end{eqnarray*}
In the literature, it is known that $\HD(\x, P_n)$ is stepwise, and vanishes outside the convex hull $\textbf{conv}(\mathcal{X}^n)$ of $\mathcal{X}^n$. It centers at Tukey's halfspace median $\x_{0,H}$, which is conveniently defined to be the average of all points in the inner-most halfspace depth trimmed region, which is usually not a singleton in various situations \citep{Don1982, LLZ2017}.

Following from \cite{KM1997}, the zonoid depth of $\x$ with respect to $\mathcal{X}^n$ is defined as
\begin{eqnarray}
\label{Def:ZD}
  \ZD(\x, P_n) &=&
  \begin{cases}
    \sup\left\{ \alpha: \x = \sum\limits_{i=1}^n p_i X_i, ~\sum\limits_{i=1}^n p_i = 1, ~np_i \in [0, 1/\alpha], ~\forall i \right\}, & \x \in \textbf{conv}(\mathcal{X}^n)\\[3ex]
    0, & \x \notin \textbf{conv}(\mathcal{X}^n).
  \end{cases}
\end{eqnarray}
Different from Tukey's halfspace depth, this depth function maximizes at the sample mean $\bar{X}$, but it also vanishes outside the convex hull $\textbf{conv}(\mathcal{X}^n)$ of $\mathcal{X}^n$.

Since both $\HD(\x, P_n)$ and $\ZD(\x, P_n)$ are completely characterized by the empirical probability measure \citep{Mos2013}, their induced sample contours obviously do not enjoy the similarity property as the Mahalanobis depth contours.

To improve them to versions that can overcome the so-called outside problem, we may borrow the idea lying behind the sample projection depth contours. That is, we remains their contours inside in the convex hull unchanged, and prolong the boundary of $\textbf{conv}(\mathcal{X}^n)$ to the area outside $\textbf{conv}(\mathcal{X}^n)$ based some generating functions with respect to their similarity centers, respectively, as did in the case of the sample projection depth. The key point here is to find a proper generating function $h(\cdot, \cdot)$.

For fixed $\vv \in \mathcal{S}^{d-1}$, observe that for any $\x = \bar{X} + \lambda_x \vv$ and $\y = \bar{X} + \lambda_y \vv$, we have that
\begin{eqnarray}
\label{eqn:ContTechMD}
  \frac{\MD(\x, P_n)}{\MD(\y, P_n)} = \frac{1 + \lambda_y \sqrt{\vv^\top \hat{\Sigma}^{-1} \vv}}{1 + \lambda_x \sqrt{\vv^\top \hat{\Sigma}^{-1} \vv}},
\end{eqnarray}
which is actually a ratio of two linear functions for given $\vv$. Similarly, for the case of the projection depth, when $\PD(\x, P_n) \leq \tau_*$ and $\PD(\y, P_n) \leq \tau_*$ with $\x = \x_{0, P} + \lambda_x \vv$ and $\y = \x_{0, P} + \lambda_y \vv$, we also have
\begin{eqnarray}
\label{eqn:ContTech}
  \frac{\PD(\x, P_n)}{\PD(\y, P_n)} = \frac{1 + a_{k_*} + \lambda_y \cdot b_{k_*}(\vv)}{1 + a_{k_*} + \lambda_x \cdot b_{k_*}(\vv)},
\end{eqnarray}
for some $k_* \in \{1, 2, \cdots, N\}$, based on Theorem \ref{th:PD}. Here $a_{k_*}$ and $b_{k_*}(\vv)$ depend only on $\mathcal{X}^n$ and $\vv$. Since $a_{k_*}$ and $b_{k_*}(\vv)$ are fixed for given $\vv$ and can be treated as constants, \eqref{eqn:ContTech} is also a ratio of two linear functions.

Bearing this in mind, we extend the conventional halfspace depth and zonoid depth to versions that may take positive depth values even outside the convex hull of the data cloud by using the similar continuation technique to \eqref{eqn:ContTechMD} and \eqref{eqn:ContTech} as follows.

The \emph{extended halfspace depth}.
\begin{eqnarray}
\label{eqn:EHD}
  \EHD(\x, P_n) =
  \begin{cases}
    \HD(\x, P_n), & \text{ if } \x \in \textbf{conv}(\mathcal{X}^n)\\[3ex]
    \frac{\lambda_{\vv_x}}{\lambda_x}\frac{1}{n}, & \text{ if } \x \notin \textbf{conv}(\mathcal{X}^n),
  \end{cases}
\end{eqnarray}
where $\lambda_x = \|\x - \x_{0, H}\|$ and $\lambda_{v_x} = \|\x_c - \x_{0, H}\|$ with $\x_c$ being the intersection point between the boundary of $\textbf{conv}(\mathcal{X}^n)$ and the ray stemming from $\x_{0, H}$ and passing through $\x$. Here we assume that $\frac{\lambda_{\vv_x}}{\lambda_x} \rightarrow 1$ if $\lambda_{\vv_x}\rightarrow +\infty$.

The \emph{extended zonoid depth}.
\begin{eqnarray}
\label{eqn:EZD}
  \EZD(\x, P_n) =
  \begin{cases}
    \ZD(\x, P_n), & \text{ if } \x \in \textbf{conv}(\mathcal{X}^n)\\[3ex]
    \frac{\tilde{\lambda}_{\vv_x}}{\tilde{\lambda}_x}\frac{1}{n}, & \text{ if } \x \notin \textbf{conv}(\mathcal{X}^n),
  \end{cases}
\end{eqnarray}
where $\tilde{\lambda}_x = \|\x - \bar{X}\|$ and $\tilde{\lambda}_{v_x} = \|\tilde \x_c - \bar{X}\|$ with $\tilde \x_c$ being the intersection point between the boundary of $\textbf{conv}(\mathcal{X}^n)$ and the ray stemming from $\bar{X}$ and passing through $\x$. Similarly, $\frac{\tilde{\lambda}_{\vv_x}}{\tilde{\lambda}_x} \rightarrow 1$ if $\tilde{\lambda}_{\vv_x}\rightarrow +\infty$.

Similar to \eqref{eqn:ContTechMD} and \eqref{eqn:ContTech}, it is easy to check that, for any $\x \notin \textbf{conv}(\mathcal{X}^n)$,
\begin{eqnarray*}
  \frac{\EHD(\x, P_n)}{\EHD(\x_c, P_n)} = \frac{\lambda_{v_x}}{\lambda_x}
\end{eqnarray*}
and
\begin{eqnarray*}
  \frac{\EZD(\x, P_n)}{\EZD(\tilde \x_c, P_n)} = \frac{\tilde{\lambda}_{v_x}}{\tilde{\lambda}_x}.
\end{eqnarray*}
That is, they are also ratios of two linear functions, \emph{which actually are the simplest linear functions among others}. Hence, these extended depth functions enjoy partly the similarity property as the projection depth function; see Figures \ref{fig:EHDidea}-\ref{fig:EZDidea} for illustrations.

Actually, there are some other depth functions, e.g., simplicial depth, also suffering from the so-called outside problem. Using the similar technique here, it is possible to extend them to versions that do not vanish outside the convex hull $\textbf{conv}(\mathcal{X}^n)$. We do not present these here for simplicity.

For the extended halfspace depth defined in \eqref{eqn:EHD} and the extended zonoid depth in \eqref{eqn:EZD}, the following theorem states that they still satisfy all four properties of defining a general statistical depth function \citep{ZS2000}. Hence, they can be used an alterative to their conventional counterparts in practical applications.

\begin{theorem}
\label{th:Extensions}
  Both the extended halfspace depth and the extended zonoid depth satisfy all four properties of defining a general statistical depth function.
\end{theorem}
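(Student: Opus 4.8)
The plan is to verify, for each of $\EHD$ and $\EZD$, the four defining properties of \cite{ZS2000}---(a) affine invariance, (b) maximality at a center, (c) monotonicity relative to the deepest point, and (d) vanishing at infinity---treating the interior and exterior of $\textbf{conv}(\mathcal{X}^n)$ separately. On the interior the extended functions coincide with $\HD$ and $\ZD$ by \eqref{eqn:EHD} and \eqref{eqn:EZD}, so all four properties are inherited there from the known behaviour of the halfspace and zonoid depths. Since the halfspace median $\x_{0, H}$ and the mean $\bar{X}$ both lie in the interior of the bounded convex hull, every ray emanating from the respective center meets the boundary of $\textbf{conv}(\mathcal{X}^n)$ in exactly one point $\x_c$ (resp.\ $\tilde\x_c$); thus each ray splits cleanly into an interior segment and an exterior segment, and the whole argument reduces to controlling the exterior piece and its gluing at the boundary.

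First I would establish affine invariance. Both $\HD$ and $\ZD$ are affine invariant, and the convex hull, the halfspace median $\x_{0, H}$, and the mean $\bar{X}$ are affine equivariant, so a nonsingular map $\x \mapsto A\x + \b$ carries the interior construction to itself. For the exterior part it suffices to note that $\x$, $\x_c$ and $\x_{0, H}$ are collinear, whence the defining quantity $\lambda_{\vv_x} / \lambda_x$ in \eqref{eqn:EHD} is a ratio of lengths measured along a single line through the center; such ratios are preserved by every affine map, so $\EHD(A\x + \b, \cdot) = \EHD(\x, \cdot)$, and identically for $\EZD$.

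Next I would dispatch vanishing at infinity and maximality. Along a fixed exterior ray the boundary distance $\lambda_{\vv_x}$ is a fixed finite number while $\lambda_x \to +\infty$, so $\EHD(\x, P_n) = (\lambda_{\vv_x} / \lambda_x)(1/n) \to 0$; boundedness of $\textbf{conv}(\mathcal{X}^n)$ bounds $\lambda_{\vv_x}$ uniformly, giving (d). For maximality, the exterior values lie in $(0, 1/n)$ because $\lambda_{\vv_x} < \lambda_x$ strictly outside, whereas the center keeps its interior maximal value, namely $\sup_{\x} \HD(\x, P_n) \ge 1/n$ for $\EHD$ and $\ZD(\bar{X}, P_n) = 1$ for $\EZD$; hence the unique deepest point is unchanged.

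The main obstacle is monotonicity (property (c)), specifically gluing the exterior formula to the interior depth across the boundary of $\textbf{conv}(\mathcal{X}^n)$. On the interior, monotonicity along rays from the center is the quasi-concavity (nestedness of the trimmed regions) of $\HD$ and $\ZD$. On the exterior, for a fixed direction the map $\lambda_x \mapsto (\lambda_{\vv_x} / \lambda_x)(1/n)$ is manifestly strictly decreasing, matching the ratio-of-linear-functions structure already exhibited in \eqref{eqn:ContTechMD}--\eqref{eqn:ContTech}. The crux is the transition: I would use the fact that the smallest positive value of each depth is $1/n$ and that $\{\x : \HD(\x, P_n) \ge 1/n\} = \{\x : \ZD(\x, P_n) \ge 1/n\} = \textbf{conv}(\mathcal{X}^n)$, so every boundary point has depth at least $1/n$, while immediately outside the extended value is strictly below $1/n$. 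Consequently the depth is non-increasing across the boundary (with at most a downward jump where the boundary depth exceeds $1/n$), and, concatenated with the two monotone pieces, it is non-increasing along the entire ray. Verifying this threshold identification, together with confirming that no ray re-enters the hull (which follows from convexity and the interiority of the centers), is the delicate step; the remaining checks are routine.
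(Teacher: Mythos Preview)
Your proposal is correct and follows essentially the same approach as the paper: verify (a)--(d) separately on the interior (where they are inherited from $\HD$/$\ZD$) and on the exterior, using affine equivariance of the center and convex hull together with preservation of length ratios along a line for (a), and boundedness of $\lambda_{\vv_x}$ with $\lambda_x\to\infty$ for (d). The only notable difference is in (c): the paper dispatches monotonicity in one line by observing that the induced contours remain convex, whereas you argue directly along rays and handle the gluing at $\partial\,\textbf{conv}(\mathcal{X}^n)$ via the identity $\{\x:\HD(\x,P_n)\ge 1/n\}=\{\x:\ZD(\x,P_n)\ge 1/n\}=\textbf{conv}(\mathcal{X}^n)$; your treatment is more explicit but equivalent in substance.
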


\begin{proof}[Proof of Theorem \ref{th:Extensions}]
  We only show the case of the extended halfspace depth, because the proof for the extended zonoid depth is similar.

  Property (a) \emph{affine-invariance}. For any $\x \in \textbf{conv}(\mathcal{X}^n)$, the proof is trivial. When $\x \notin \textbf{conv}(\mathcal{X}^n)$, since both $\x_{0, H}$ and $\textbf{conv}(\mathcal{X}^n)$ are affine-invariant, it is easy to show that $\mathbf{A}\x_c + \b$ is still the intersection point of the ray, stemming from $\mathbf{A}\x_{0, H} + \b$ and passing through $\mathbf{A}\x + \b$, with $\textbf{conv}(\mathbf{A}\mathcal{X}^n + \b)$, where $\mathbf{A}$ denotes any $d\times d$ nonsingular matrix, $\b$ is a $d$-variate vector, and $\mathbf{A}\mathcal{X}^n + \b = \{\mathbf{A}X_1 + \b, \mathbf{A}X_2 + \b, \cdots, \mathbf{A}X_n + \b\}$. Hence, $\HD(\mathbf{A}\x_c + \b, P_{n, \mathbf{A}\mathcal{X}^n + \b}) = 1/n$. By noting that
  \begin{eqnarray*}
    \frac{\lambda_{v_{Ax+b}}}{\lambda_{Ax+b}} = \frac{\|(\mathbf{A}\x_c + \b) - (\mathbf{A}\x_{0, H} + \b)\|}{\|(\mathbf{A}\x + \b) - (\mathbf{A}\x_{0, H} + \b)\|} = \frac{\lambda_{v_x}\|\mathbf{A} \vv_x\|}{\lambda_{x}\|\mathbf{A} \vv_x\|} = \frac{\lambda_{v_x}}{\lambda_{x}},
  \end{eqnarray*}
  we have that $\EHD(\mathbf{A}\x + \b, P_{n, \mathbf{A}\mathcal{X}^n + \b}) = \EHD(\x, P_n)$, where $\vv_x = (\x - \x_{0, H}) / \lambda_x$, and $P_{n, \mathbf{A}\mathcal{X}^n + \b}$ denotes the empirical probability measure related to $P_{n, \mathbf{A}\mathcal{X}^n + \b}$.

  The proof of Property (b), i.e., \emph{maximality at a center point}, is trivial. For Property (c), \emph{monotonicity related to the center point}, by the construction of the extended halfspace depth, it is easy to check that its induced contours are still convex. Hence, Property (c) follows immediately.  For Property (d), i.e.,  \emph{vanishing at infinity}, when $\|\x\| \rightarrow +\infty$, we have $\lambda_x \rightarrow +\infty$, while $\x_c$ is always on the convex hull $\textbf{conv}(\mathcal{X}^n)$, which is bounded for any given data set $\mathcal{X}^n$. In this sense, $\lambda_{v_x} / \lambda_x \rightarrow 0$, as $\|\x\| \rightarrow +\infty$.

  This completes the proof of this theorem.
\end{proof}

In some occasions, we may need to derive the theoretical property of the statistical procedures relate these depth functions. A good depth function is expected to have a nonsingular population version. For the extended halfspace depth and the extended zonoid depth proposed above, we have the following result.

\begin{theorem}
\label{th:Convergence}
  Suppose $X_1, X_2, \cdots, X_n$ are i.i.d. copies of $X$. We have:
  \begin{enumerate}
    \item[(i)] The extended halfspace depth $\EHD(\x, P_n)$ converges in probability to the same population $\HD(\x, P)$ as that of $\HD(\x, P_n)$, for any given $\x \in \mathbb{R}^d$.
    \item[(ii)] When $E(\|X\|) < +\infty$, the extended zonoid depth $\EZD(\x, P_n)$ converges in probability to the same population $\ZD(\x, P)$ as that of $\ZD(\x, P_n)$, for any given $\x \in \mathbb{R}^d$.
  \end{enumerate}
\end{theorem}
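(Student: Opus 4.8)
The plan is to reduce the consistency of each extended depth to the classical pointwise consistency of its original counterpart, by observing that the two versions can differ by at most $1/n$ at every fixed point. First I would establish this deterministic bound. Fix $\x \in \mathbb{R}^d$. If $\x \in \textbf{conv}(\mathcal{X}^n)$, then $\EHD(\x, P_n) = \HD(\x, P_n)$ by definition \eqref{eqn:EHD}. If $\x \notin \textbf{conv}(\mathcal{X}^n)$, then $\HD(\x, P_n) = 0$, while the exit point $\x_c$ lies on the segment joining $\x_{0, H}$ to $\x$ (because the halfspace median $\x_{0, H}$ is contained in $\textbf{conv}(\mathcal{X}^n)$); hence $\lambda_{\vv_x} \le \lambda_x$, so that $\EHD(\x, P_n) = (\lambda_{\vv_x}/\lambda_x)(1/n) \le 1/n$. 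Combining the two cases yields
\[
  |\EHD(\x, P_n) - \HD(\x, P_n)| \le \frac{1}{n},
\]
which holds for every realization and every $\x$.

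Next I would invoke the known pointwise consistency $\HD(\x, P_n) \to \HD(\x, P)$ in probability. This rests on the Vapnik--Chervonenkis property of the class of closed halfspaces, which gives $\sup_{H} |P_n(H) - P(H)| \to 0$ almost surely; since $|\inf_{\u} g_n(\u) - \inf_{\u} g(\u)| \le \sup_{\u} |g_n(\u) - g(\u)|$ with $g_n(\u) = P_n(\u^\top X \le \u^\top \x)$ and $g(\u)$ its population analogue, the sample halfspace depth converges to its population value. The triangle inequality together with the $1/n$ bound then yields $\EHD(\x, P_n) \to \HD(\x, P)$ in probability, establishing (i).

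For the extended zonoid depth the argument is parallel. Since $\bar{X} \in \textbf{conv}(\mathcal{X}^n)$ always holds, the same geometric reasoning gives $\tilde{\lambda}_{\vv_x} \le \tilde{\lambda}_x$ when $\x$ lies outside the hull, whence $|\EZD(\x, P_n) - \ZD(\x, P_n)| \le 1/n$ for every realization and every $\x$. Under $E(\|X\|) < +\infty$ the classical consistency $\ZD(\x, P_n) \to \ZD(\x, P)$ in probability is available (the finite first moment is precisely what makes the population lift zonoid, and hence $\ZD(\cdot, P)$, well defined); combining it with the deterministic bound gives (ii). Note that the moment condition enters only through the consistency of the original zonoid depth, whereas the $1/n$ bound itself requires no moment assumption, which is why part (i) needs none.

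The genuinely new ingredient, namely the $1/n$ bound, is elementary: it uses only that the relevant center ($\x_{0, H}$ or $\bar{X}$) sits inside the sample convex hull and that the original depth already vanishes outside it. The substantive input is therefore delegated to the two classical consistency statements, so the main obstacle is simply invoking these correctly, i.e., the VC uniform law for halfspaces in (i) and the first-moment requirement securing existence of the population zonoid object in (ii). A convenient feature is that the $1/n$ bound holds \emph{regardless} of whether $\x$ falls inside or outside the random hull, so one never needs to track the convergence of the sample centers $\x_{0, H}$ and $\bar{X}$, which keeps the whole argument short.
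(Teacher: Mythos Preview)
Your proposal is correct and follows essentially the same route as the paper: the paper also observes that outside $\textbf{conv}(\mathcal{X}^n)$ one has $\frac{\lambda_{\vv_x}}{\lambda_x}\frac{1}{n} \le \frac{1}{n} \to 0$, and then defers to the classical consistency of $\HD(\cdot,P_n)$ and $\ZD(\cdot,P_n)$ (citing \cite{KM1997, Zuo2003}). Your version is simply more explicit, packaging the observation as the deterministic bound $|\EHD-\HD|\le 1/n$ (resp.\ $|\EZD-\ZD|\le 1/n$) and spelling out the VC argument for halfspace depth, but the underlying idea is identical.
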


The proof of Theorem \ref{th:Convergence} is trivial by noting that if $\x \notin \mathbf{conv}(\mathcal{X}^n)$, $\frac{\lambda_{v_x}}{\lambda_x} \frac{1}{n} \le \frac{1}{n} \rightarrow 0$ as $n \rightarrow \infty$. The rest of the proof follows immediately by using similar proofs to \cite{KM1997, Zuo2003}, respectively.

\vskip 0.1 in
\section{Illustrations}
\paragraph{}
\vskip 0.1 in \label{Sec:Illustrations}

In this section, we will use a real data example to illustrate the main results of this paper. The data set here is actually a part of the Boston housing data, which can be downloaded from \url{http://lib.stat.cmu.edu/datasets/boston}, or from the \textbf{Matlab} package accompanying with \cite{LZ2015}. Here we take onely the first 65 items of variables \textbf{rm}, \textbf{dis} for purpose of illustrations, where \textbf{rm} denotes the average number of rooms per dwelling, and \textbf{dis} the weighted distances to five Boston employment centres, respectively.

\begin{figure}[H]
\centering
	\includegraphics[angle=0,width=4in]{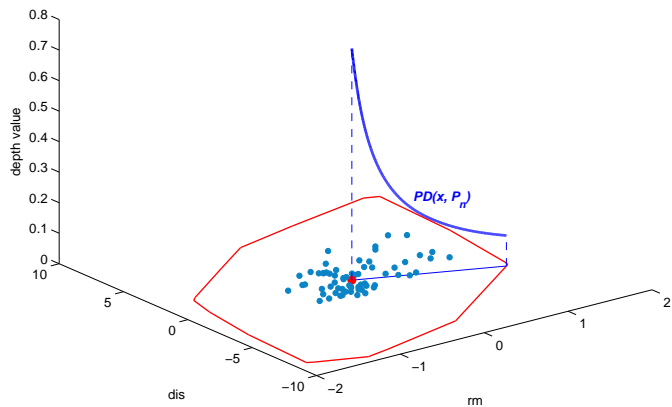}
\caption{Shown is the projection depth function along a fixed ray stemming from the projection depth median $\x_{0, P}$, where the small points stand for the observations, and the big point is the projection depth median.}
\label{fig:PDfunction}
\end{figure}

\begin{figure}[H]
\centering
	\includegraphics[angle=0,width=4in]{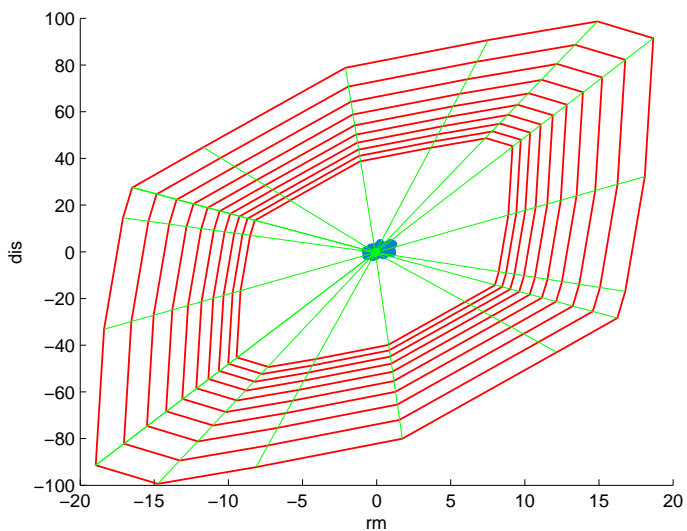}
\caption{Shown are ten sample projection depth contours with depth values less than 0.02, and the rays stemming from the projection depth median and passing through the vertices of these contours. The points in the center stand for the observations and their related projection depth median.}
\label{fig:PDcontours}
\end{figure}

Using this data set, we plot the projection depth function along the direction vector $\vv = (0.5022, -0.8648)^\top$ for an illustration. Figure \ref{fig:PDfunction} indicates that the projection depth value $\PD(\x, P_n)$ decreases very regularly when $\x$ is moving away from the projection depth median along $\vv$. Since for any $\vv$, $\PD(\x, P_n)$ decreases following a similar fashion, the projection depth contours with depth values small enough are similar to each other. As shown in Figure~\ref{fig:PDcontours}, the vertices of these contours lie in some rays stemming from the projection depth median. This confirms the theoretical results given in Section \ref{Sec:MMS}.

Furthermore, we also illustrate the idea behind the extended halfspace depth and the extended zonoid depth. As shown in Figures \ref{fig:EHDidea}-\ref{fig:EZDidea}, the contours outside the convex/having depth values smaller then $1/n$ are similar to the boundary of the convex hull with their similarity center to be the halfspace depth median and the sample mean, respectively. Different from their original counterparts, these two depth functions do not vanish outside the convex hull of the data set.

\begin{figure}[H]
\centering
	\includegraphics[angle=0,width=4in]{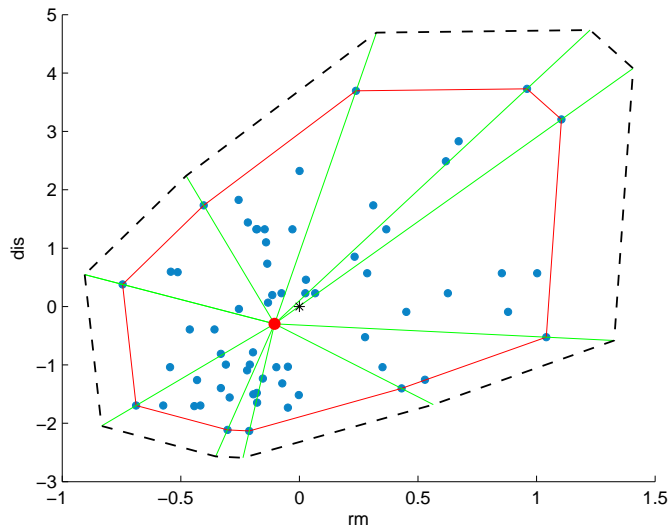}
\caption{Shown are two contours of the extended halfspace depth with depth values 0.8/65, 1/65 from the periphery inwards, and rays stemming from the halfspace depth median and passing through the vertices of $\textbf{conv}(\mathcal{X}^n)$. The star stands for the sample mean, the big point is the halfspace depth median, and the small points are the observations.}
\label{fig:EHDidea}
\end{figure}

\begin{figure}[H]
\centering
	\includegraphics[angle=0,width=4in]{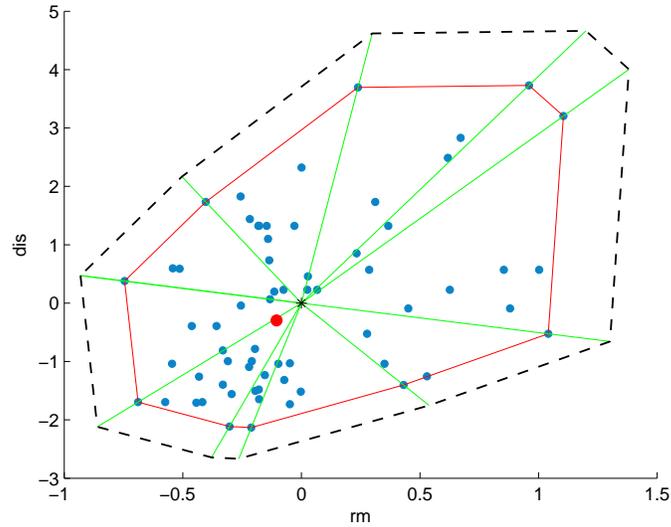}
\caption{Shown are two contours of the extended zonoid depth with depth values 0.8/65, 1/65 from the periphery inwards, and rays stemming from the sample mean and passing through the vertices of $\textbf{conv}(\mathcal{X}^n)$.}
\label{fig:EZDidea}
\end{figure}

Figures \ref{fig:EHC}-\ref{fig:EZC} shows the depth contours with depth values being 0.0092, 0.0123, 0.4462, 0.3846, 0.3231, 0.2615, 0.2000, 0.1385, 0.0769, 0.0154 for the extended halfspace depth, and 0.0092, 0.0123, 0.0154, 0.1560, 0.2967, 0.4374, 0.5780, 0.7187, 0.8593, 1.0000 for the extended zonoid depth, respectively, from the periphery inwards. The innermost contours are the same as the original halfspace depth and zonoid depth, respectively.

\begin{figure}[H]
\centering
	\includegraphics[angle=0,width=4in]{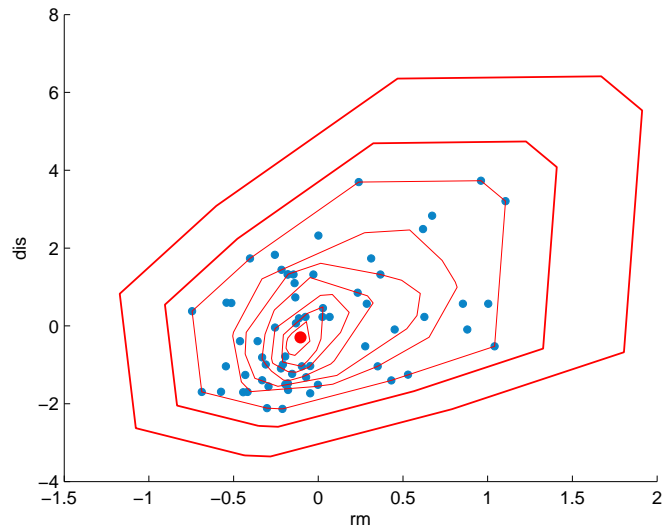}
\caption{Show are ten contours of the extended halfspace depth centering at the halfspace median.}
\label{fig:EHC}
\end{figure}

\begin{figure}[H]
\centering
	\includegraphics[angle=0,width=4in]{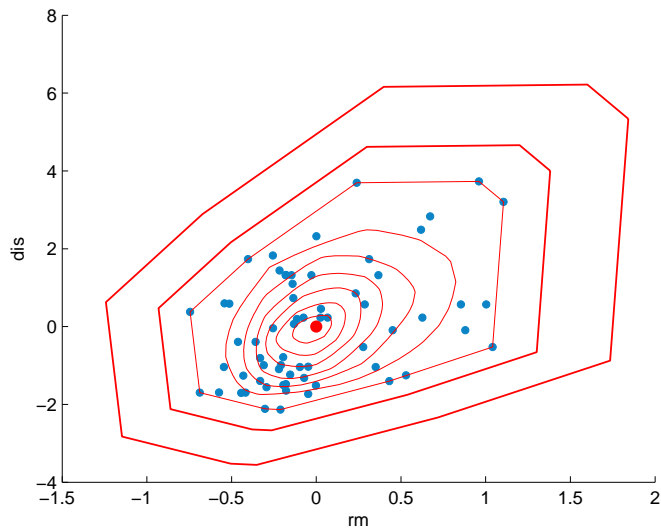}
\caption{Show are ten contours of the extended zonoid depth centering at the sample mean.}
\label{fig:EZC}
\end{figure}

\vskip 0.1 in
\section{Concluding Remarks}
\paragraph{}
\vskip 0.1 in \label{Sec:Remarks}

In this paper, we first considered the similarity property existing in the Mahalanobis depth and projection depth. It turns out that the projection depth behaves partly similar the Mahalanobis depth in the sense that it can induces some contours that are similar to each other. We then deeply explored the idea behind the similarity, and employed this idea to extend two well-known depths, i.e., halfspace depth and zonoid depth, to versions that do not vanish outside the convex hull of the data set.

It is worth mentioning that since the convex hull of the data set is of polyhedral shape, and their vertices should be some data points. Hence, their sample contours outside the convex hull is \emph{computable}, and the computation is in fact \emph{not very complex}, while for computing the ordinary sample contours inside in the convex hull, various mature algorithm have already been developed; see, e.g., \cite{PM2010a, LMM2017} and \cite{MLB2009} for halfspace depth and zonoid depth, respectively. For the case of the extended halfspace depth, we need to compute the halfspace depth median, which is computationally intensive. Fortunately, there is an exact algorithm for the halfspace depth median implemented by \textbf{C++} in the literature now; see e.g. \cite{LMM2017, LLZ2017}. Hence, the extended depths are still computable.

Furthermore, they satisfy all four properties of defining a general statistical depth function and have nonsingular populations. Hence, they are desirable alterative to the conventional halfspace depth and zonoid depth in applications, e.g., classification.

\vskip 0.1 in
\section*{Acknowledgements}
\paragraph{}
\vskip 0.1 in

The research is supported by NNSF of China (Grant No.11601197, 11461029), China Postdoctoral Science Foundation funded project (2016M600511, 2017T100475),  NSF of Jiangxi Province (No.20171ACB21030, 20161BAB201024), and the Key Science Fund Project of Jiangxi provincial education department (No.GJJ150439).
\medskip

\bigskip

\end{document}